\newtheorem{theorem}{Theorem}
\newtheorem{lemma}[theorem]{Lemma}
\newtheorem{corollary}[theorem]{Corollary}
\newcounter{rrule}
\newenvironment{rrule}{\refstepcounter{rrule}\par\smallskip\noindent
\textbf{R\arabic{rrule}}\quad}{} 
\newcounter{brule}
\newenvironment{brule}{\refstepcounter{brule}\par\smallskip\noindent
\textbf{B\arabic{brule}}\quad}{} 
\newcounter{frrule}
\newenvironment{frrule}{\refstepcounter{frrule}\par\smallskip\noindent
\textbf{FR\arabic{frrule}}\quad}{} 
\newcounter{fbrule}
\newenvironment{fbrule}{\refstepcounter{fbrule}\par\smallskip\noindent
\textbf{FB\arabic{fbrule}}\quad}{} 
\newcommand{\lpvcalgname}{\mathrm{LPVCalg}}
\newcommand{\lpvcalg}[2]{\lpvcalgname(#1,#2)}
\newcommand{\F}[2]{\mathcal{F}_{#1,#2}}
\newcommand{\Fv}{\F{v}{k}}
\newcommand{\Falgname}{\mathrm{Falg}}
\newcommand{\Falg}[2]{\Falgname(#1,#2)}
\newcommand{\tree}[3]{T_{#1}(#2,#3)}
\newcommand{\treetop}[4]{T^{#1}_{#2}(#3,#4)}
\newcommand{\Ftree}[2]{\tree{\Falgname}{#1}{#2}}
\newcommand{\Ftreetop}[3]{\treetop{#1}{\Falgname}{#2}{#3}}
\newcommand{\lpvctree}[2]{\tree{\lpvcalgname}{#1}{#2}}
\newcommand{\combinedtree}[2]{\tree{}{#1}{#2}}
\newcommand{\combinedtreeb}[2]{\tree{\mathrm{C}}{#1}{#2}}
\newcommand{\timea}{$O^*(3.945^k)$}
\newcommand{\timeb}{$O^*(4.947^k)$}
\newcommand{\timec}{$O^*(5.951^k)$}
\begin{document}

\title{$l$-path vertex cover is easier than $l$-hitting set for small $l$}
\author{Dekel Tsur%
\thanks{Ben-Gurion University of the Negev.
Email: \texttt{dekelts@cs.bgu.ac.il}}}
\date{}
\maketitle

\begin{abstract}
In the \emph{$l$-path vertex cover problem}
the input is an undirected graph $G$ and an integer $k$.
The goal is to decide whether there is a set of vertices $S$ of
size at most $k$ such that $G-S$ does not contain a path with $l$ vertices.
In this paper we give parameterized algorithms for $l$-path vertex cover
for $l = 5,6,7$,
whose time complexities are \timea, \timeb, and \timec, respectively.
\end{abstract}

\paragraph{Keywords} graph algorithms, parameterized complexity.

\section{Introduction}
For an undirected graph $G$, an \emph{$l$-path} is a path in $G$ with $l$
vertices.
An \emph{$l$-path vertex cover} is a set of vertices $S$ such that
$G-S$ does not contain an $l$-path.
In the \emph{$l$-path vertex cover problem},
the input is an undirected graph $G$ and an integer $k$. The goal is to decide
whether there is an $l$-path vertex cover of $G$ with size at most $k$.
The problem for $l=2$ is the famous vertex cover problem.
For every fixed $l$, there is a simple reduction from vertex cover
to $l$-path vertex cover.
Therefore, $l$-path vertex cover is NP-hard for every constant $l \geq 2$.

For every fixed $l$, the $l$-path vertex cover problem is a special case
of the $l$-hitting set problem.
Therefore, there is a simple algorithm for $l$-path vertex cover with running
time $O^*(l^k)$.
Better time complexities can be achieved by using the best known algorithms
for $l$-hitting set.
For $l = 3,4,5,6$, the best algorithms for $l$-hitting set have running times of
$O^*(2.076^k)$, $O^*(3.076^k)$, $O^*(4.076^k)$, and $O^*(5.065^k)$,
respectively~\cite{wahlstrom2007algorithms,fomin2010iterative,fernau2006parameterized}.

For specific values of $l$,
it is possible to obtain algorithms for $l$-path vertex cover that are faster
than the best known algorithms for $l$-hitting set.
Algorithms for 3-path vertex cover were given
in~\cite{tu2015fixed,wu2015measure,katrenivc2016faster,chang2016fixed,xiao2017kernelization,Tsur_3pvc},
algorithms for 4-path vertex cover were given in~\cite{tu2016fpt,Tsur_4pvc},
and an algorithm for 5-path vertex cover was given in~\cite{cerveny2019}.

In this paper we give algorithms for $l$-path vertex cover for $l = 5,6,7$.
The time complexities of our algorithms are \timea, \timeb, and \timec,
respectively.
See Table~\ref{tab:results} for a comparison between our results and previous
results.
We note that our algorithm for $5$-path vertex cover is both faster than the
algorithm of {\v{C}}erven{\`y} and Ond{\v{r}}ej~\cite{cerveny2019}
and significantly simpler.

\begin{table}
\centering
\begin{tabular}{@{}llllll@{}}
\toprule
$l$ & 3 & 4 & 5 & 6 & 7 \\
\midrule
Old & $O^*(1.713^k)$~\cite{Tsur_3pvc} & $O^*(2.619^k)$~\cite{Tsur_4pvc} &
$O^*(4^k)$~\cite{cerveny2019} & $O^*(5.065^k)$~\cite{fernau2006parameterized} &
$O^*(6.044^k)$~\cite{fernau2006parameterized} \\
New &  &  & \timea & \timeb & \timec \\
\bottomrule
\end{tabular}
\caption{Comparison of old and new results for the $l$-path vertex cover
problem.
The old results for $l=6$ and $l=7$ are due to algorithms for $l$-hitting set.}
\label{tab:results}
\end{table}


\section{The algorithm}\label{sec:algorithm}

In this section we present an algorithm for solving $l$-path vertex cover.
While our algorithm is similar to the algorithm of
Boral et al.~\cite{boral2016fast} for the cluster vertex deletion problem,
there are several important differences. In particular, the method used for
computing a $v,k$-family (to be defined below) is completely different in our
algorithm.
For the rest of this section we consider the $l$-path vertex cover for some
fixed $l$.

We first give several definitions.
For set of vertices $S$ in a graph $G$, $G[S]$ is the subgraph
of $G$ induced by $S$ (namely, $G[S]=(S,E\cap (S\times S))$).
We also define $G-S = G[V\setminus S]$.
For a path $P$ in $G$, $V(P)$ is the set of the vertices of $P$.

For a vertex $v$ in $G$, $C_v$ is the connected component of $G$ that contains
$v$, and $G_v = G[C_v]$.
A \emph{$v$-path} is an $l$-path that contains $v$.
A \emph{$v$-hitting set} is a set of vertices $X \subseteq C_v \setminus \{v\}$
such that $X \cap V(P) \neq \emptyset$ for every $v$-path $P$.
Denote by $\beta(G,v)$ the minimum size of a $v$-hitting set in $G$.
Two $v$-paths $P,P'$ are called \emph{intersecting} if $V(P) \neq V(P')$
and $V(P) \cap V(P')$ contains at least one vertex other than $v$.

A \emph{$v,k$-family} of a graph $G$ is a family $\mathcal{F}$ of $v$-hitting
sets such that
(1) Every set in $\mathcal{F}$ has size at most $k$, and
(2) If $(G,k)$ is a yes instance of $l$-path vertex cover,
there is a $l$-path vertex cover $S$ of $G$ of size at most $k$ such that
either $v\in S$ or $A \subseteq S$ for some set $A \in \mathcal{F}$.

The following lemma will be used below to prove the correctness of
the algorithm.
\begin{lemma}\label{lem:X}
Let $X$ be a $v$-hitting set.
There is a minimum size $l$-path vertex cover $S$ such that
either $v \notin S$ or $|X \setminus S| \geq 2$.
\end{lemma}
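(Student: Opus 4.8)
The plan is to start from an arbitrary minimum $l$-path vertex cover $S$ and, whenever it contains $v$ but fails the conclusion, to transform it into another minimum $l$-path vertex cover that avoids $v$. If some minimum $l$-path vertex cover already omits $v$, the statement holds trivially, so the whole argument can be carried out under the standing assumption that \emph{every} minimum $l$-path vertex cover contains $v$; under this assumption the goal reduces to proving $|X \setminus S| \geq 2$ for every minimum $S$. Throughout I would use the fact, immediate from the definition $X \subseteq C_v \setminus \{v\}$, that $v \notin X$.

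The first step is to rule out $X \subseteq S$. The key observation is that deleting $v$ from a cover can only damage it at $v$-paths, and those are all hit by $X$; more precisely, every $l$-path avoiding $v$ that is hit by $S$ is still hit by $S \setminus \{v\}$, and every $v$-path is hit by $X$. Hence if $X \subseteq S$ then (since $v \notin X$) $X \subseteq S \setminus \{v\}$, so $S \setminus \{v\}$ is an $l$-path vertex cover of size $|S| - 1$, contradicting minimality. This yields $|X \setminus S| \geq 1$.

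The second and main step handles the remaining case $|X \setminus S| = 1$, say $X \setminus S = \{x\}$. Here I would swap $v$ for $x$ and set $S' = (S \setminus \{v\}) \cup \{x\}$. Then $|S'| \leq |S|$ since $v \in S$ and $x \notin S$; moreover $X \subseteq S'$, because the part $X \cap S$ survives the deletion of $v$ (as $v \notin X$) and the single missing vertex $x$ is added back. Consequently $S'$ hits every $v$-path, and it hits every $l$-path avoiding $v$ because $S \setminus \{v\} \subseteq S'$ already does. Thus $S'$ is a minimum $l$-path vertex cover, and $v \notin S'$ because $v \neq x$ (again from $x \in X \subseteq C_v \setminus \{v\}$), contradicting the standing assumption. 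Therefore $|X \setminus S| \geq 2$, which completes the proof.

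I do not expect a real obstacle here; the only thing to be careful about is the bookkeeping that $v \notin X$ and $v \neq x$, so that removing $v$ from $S$ destroys no element of $X$ and re-adding $x$ genuinely evicts $v$ from the cover. Once those two facts are in hand, both the case $X \subseteq S$ and the swap case reduce to the same routine verification: a cover remains a cover as long as it still meets every $v$-path (handled by $X$) and every $l$-path through other vertices (handled by the surviving part of $S$).
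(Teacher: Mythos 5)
Your proof is correct and is essentially the paper's argument: the paper also swaps $v$ out for the missing part of $X$, forming $S' = (S\setminus\{v\})\cup X$ in one step (which, when $|X\setminus S|\le 1$, is exactly your $S\setminus\{v\}$ or $(S\setminus\{v\})\cup\{x\}$) and observing that $S'$ still hits every $l$-path because any uncovered path would have to be a $v$-path, which $X$ hits. Your case split and the bookkeeping that $v\notin X$ are just a more explicit rendering of the same idea.
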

\begin{proof}
Let $S$ be a minimum size $l$-path vertex cover of $G$.
Assume that $v\in S$ and $|X \setminus S| \leq 1$ otherwise we are done.
Let $S' = (S \setminus \{v\}) \cup X$.
Note that $S'$ is also an $l$-path vertex cover of $G$
(Suppose conversely that there is an $l$-path $P$ in $G-S'$.
Since $G-S$ does not contain an $l$-path, we have that $P$ contains $v$.
Namely, $P$ is a $v$-path.
By definition, $X \cap V(P) \neq \emptyset$ and therefore
$S' \cap V(P) \neq \emptyset$, a contradiction).
Additionally, $|S'| \leq |S|$.
Thus, $S'$ is a minimum size $l$-path vertex cover of $G$ and $v \notin S'$.
\end{proof}
\begin{corollary}\label{cor:beta-1}
If $\beta(G,v) = 1$ then there is a minimum size $l$-path vertex cover $S$
such that $v\notin S$.
\end{corollary}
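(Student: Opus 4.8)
The plan is to obtain the statement as an immediate consequence of Lemma~\ref{lem:X}, so the work is really just in unwinding the definitions. First I would use the hypothesis $\beta(G,v)=1$: by definition of $\beta$, this means there exists a $v$-hitting set $X$ with $|X| = \beta(G,v) = 1$. (Note that we do not even need $X$ to be a minimum $v$-hitting set for the argument; any $v$-hitting set of size $1$ suffices, and one exists precisely because $\beta(G,v) \le 1$.)

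Next I would feed this particular $X$ into Lemma~\ref{lem:X}. The lemma produces a minimum size $l$-path vertex cover $S$ of $G$ for which either $v \notin S$ or $|X \setminus S| \ge 2$. I would then rule out the second alternative by a cardinality count: since $X \setminus S \subseteq X$ and $|X| = 1$, we have $|X \setminus S| \le 1 < 2$. Hence the only surviving possibility is $v \notin S$, which is exactly what the corollary asserts.

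There is no real obstacle here; the only thing to be careful about is the direction of the definition of $\beta$ (that $\beta(G,v)=1$ gives us a size-$1$ witness, not the other way around) and the trivial bound $|X\setminus S|\le|X|$. I would keep the proof to two or three sentences.
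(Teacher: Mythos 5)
Your proposal is correct and matches the paper's intended derivation: the corollary is stated as an immediate consequence of Lemma~\ref{lem:X}, obtained by taking a $v$-hitting set $X$ of size $1$ (which exists since $\beta(G,v)=1$) and noting that $|X\setminus S|\ge 2$ is impossible, leaving $v\notin S$. Nothing is missing.
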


We now describe the main algorithm, which is called $\lpvcalgname$.
Algorithm $\lpvcalgname$ is a branching algorithm.
Given an instance $(G,k)$ the algorithm applies the first applicable rule from
the rules below.
The reduction rules of the algorithm are as follows.

\begin{rrule}
If $k < 0$, return `no'.
\end{rrule}

\begin{rrule}
If $G$ is an empty graph, return `yes'.
\end{rrule}

\begin{rrule}
If there is a vertex $v$ such that there are no $v$-paths,
return $\lpvcalg{G-v}{k}$.\label{rrule:redundant-vertex}
\end{rrule}

\begin{rrule}
If there is a vertex $v$ such that $G_v-v$ does not contain an $l$-path,
return $\lpvcalg{G-C}{k-1}$.
\label{rrule:size-1}
\end{rrule}

If the reduction rules above cannot be applied, the algorithm chooses an
arbitrary vertex $v$.
Additionally, the algorithm decides whether $\beta(G,v)$ is 1, 2, or at least 3 
(this can be done in $n^{O(1)}$ time).
If $\beta(G,v) = 2$, the algorithm constructs a $v$-hitting set
$X = \{w_1, w_2\}$ of size~2.
Denote by $C_i$ the connected component of $G_v-v$ that contains $w_i$
(note that $C_1$ can be equal to $C_2$).

\begin{lemma}\label{lem:C0}
If reduction rule~R\ref{rrule:size-1} cannot be applied, $\beta(G,v) = 2$,
and the graphs $G[C_1]$ and $G[C_2]$ do not contain $l$-paths,
then there is exactly one connected component $C_0$ of $G_v-v$ that contains
an $l$-path.
\end{lemma}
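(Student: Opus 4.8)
The plan is to prove the existence and the uniqueness of $C_0$ separately. Existence is immediate: since reduction rule~R\ref{rrule:size-1} cannot be applied to $v$, the graph $G_v-v$ contains an $l$-path $P$, and $V(P)$, being connected, lies inside a single connected component of $G_v-v$, which therefore contains an $l$-path.

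For uniqueness I would argue by contradiction: suppose $C$ and $C'$ are two distinct connected components of $G_v-v$, each containing an $l$-path. Since $G[C_1]$ and $G[C_2]$ contain no $l$-path, none of $C,C'$ can equal $C_1$ or $C_2$, so $w_1,w_2\notin C\cup C'$. The goal is to exhibit a $v$-path $P$ with $V(P)\subseteq C\cup C'\cup\{v\}$; since $v\notin X$, such a $P$ contains neither $w_1$ nor $w_2$, contradicting the fact that $X=\{w_1,w_2\}$ is a $v$-hitting set.

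To construct $P$ I would first record the following fact. If $D$ is a connected component of $G_v-v$ containing an $l$-path and $u$ is a vertex of $D$ adjacent to $v$, then $G[D]$ contains a path starting at $u$ with at least $\lceil(l+1)/2\rceil$ vertices: take a shortest path in $G[D]$ from $u$ to the vertex set of a fixed $l$-path $p_1\cdots p_l$ of $G[D]$, reaching it at $p_i$, and then continue along $p_1\cdots p_l$ in whichever of the two directions is longer, which adds at least $\lceil(l+1)/2\rceil-1$ new vertices (minimality of the shortest path guarantees that no vertex is repeated). Now, because $G_v$ is connected, $v$ has a neighbour $u$ in $C$ and a neighbour $u'$ in $C'$. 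Applying the fact in $C$ and in $C'$ yields two paths that are vertex-disjoint (they lie in different components); concatenating them through the edges $uv$ and $vu'$ gives a simple path through $v$ on at least $2\lceil(l+1)/2\rceil+1\ge l+2$ vertices in which $v$ is adjacent to $u$ and to $u'$. Taking the sub-path consisting of the $\lfloor(l-1)/2\rfloor$ vertices nearest $u$ on one side, then $v$, then the $\lceil(l-1)/2\rceil$ vertices nearest $u'$ on the other side produces a path on exactly $l$ vertices that contains $v$ and lies in $C\cup C'\cup\{v\}$ — this is the desired $v$-path $P$.

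The only place that needs care is this final counting step: one must check that each side really contains enough vertices to contribute $\lfloor(l-1)/2\rfloor$, respectively $\lceil(l-1)/2\rceil$, of them, which holds because both quantities are at most $\lceil(l+1)/2\rceil$. Everything else — existence, the fact that $C,C'\notin\{C_1,C_2\}$, and that $P$ misses $w_1$ and $w_2$ — is immediate, so the full write-up should be short.
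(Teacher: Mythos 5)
Your proposal is correct and follows essentially the same route as the paper: existence of $C_0$ from the inapplicability of R\ref{rrule:size-1}, and uniqueness by building, in each of two hypothetical components, a path of roughly $\lceil (l+1)/2\rceil$ vertices starting at a neighbour of $v$ (via a shortest path to a fixed $l$-path and continuing along its longer arm), concatenating through $v$, and extracting an $l$-vertex $v$-path that avoids $X$. The paper is terser about the final length bookkeeping, but the argument is the same.
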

\begin{proof}
Since reduction rule~R\ref{rrule:size-1} cannot be applied, there is at least
one connected component $C_0$ of $G_v-v$ that contains an $l$-path.
Suppose conversely that there is a connected component $C'_0 \neq C_0$ of
$G_v-v$ that contains an $l$-path.
Let $y_1,\ldots,y_l$ be an $l$-path in $G[C_0]$. Let $v,x_1,\ldots,x_s,y_i$ be
a shortest path in $G$ between $v$ and some vertex from $y_1,\ldots,y_l$.
We assume without loss of generality that $i \leq \lceil l/2 \rceil$.
We have that $x_1,\ldots,x_s,y_i,\ldots,y_l$ is a path in $G[C_0]$ with at
least $l/2$ vertices whose first vertex is adjacent to $v$.
Similarly, there is a path in $G[C'_0]$ with at
least $l/2$ vertices whose first vertex is adjacent to $v$.
Therefore, there is a $v$-path $P$ such that
$V(P) \subseteq \{v\} \cup C_0 \cup C'_0$.
Since $G[C_1]$ and $G[C_2]$ do not contain $l$-paths, it follows that
$V(P) \cap X = \emptyset$, contradicting the fact that $X$ is a $v$-hitting set.
\end{proof}

The algorithm now tries to apply the following reduction rules.

\begin{rrule}
If $\beta(G,v) = 2$, the graphs $G[C_1]$ and $G[C_2]$ do not contain $l$-paths,
and there is a $v$-path that does not contain a vertex from $C_0$,
return $\lpvcalg{G-v}{k-1}$.
\label{rrule:beta-2}
\end{rrule}

Rule~R\ref{rrule:beta-2} is safe since there is a minimum size $l$-path vertex
cover that contains $v$: If $S$ is a minimum size $l$-path vertex cover that
does not contain $v$,
then $S$ must contain a vertex $x \in V(P)\setminus \{v\}$, where $P$ is
a $v$-path that does not contain a vertex from $C_0$.
By Lemma~\ref{lem:C0}, the connected component of $G_v-v$ that contains $x$ does not contain an $l$-path.
Therefore, the set $S' = (S\setminus \{x\}) \cup \{v\}$ is an $l$-path vertex
cover of $G$.
Since $|S'|=|S|$, we obtain that $S'$ is a minimum size $l$-path vertex cover
that contain $v$.

\begin{rrule}
If $\beta(G,v) = 2$ and the graphs $G[C_1]$ and $G[C_2]$ do not contain
$l$-paths, return $\lpvcalg{G'}{k}$ where the graph $G$ is obtained from $G$
as follows.
Let $r$ be the maximum length of a path that starts at $v$ in $G_v - C_0$.
Delete the vertices of $C_v \setminus (C_0 \cup \{v\})$ from $G$,
add $r$ new vertices to the graph, construct an $r$-path on the new vertices,
and add an edge between the $v$ and the first vertex in the $r$-path.
\label{rrule:beta-2b}
\end{rrule}

The safeness of Rule~R\ref{rrule:beta-2b} also follows from Lemma~\ref{lem:C0}.
When the reduction rules above cannot be applied,
the algorithm computes a $v,k$-family $\Fv$ (we will describe below how to
compute $\Fv$).
It then applies one of the following branching rules, depending on $\beta(G,v)$.

\begin{brule}
If $\beta(G,v) = 1$, branch on every set in $\Fv$.
\label{brule:1}
\end{brule}

The safeness of Rule~\ref{brule:1} follows from Corollary~\ref{cor:beta-1}.

\begin{brule}
If $\beta(G,v) = 2$,
let $i$ be an index such that $G[C_i]$ contains an $l$-path.
Construct a $(w_i,k-1)$-family $\F{w_i}{k-1}^{G-v}$ for the graph $G-v$.
Branch on every set in $\Fv$ and on $\{v\} \cup S$ for every
$S \in \F{w_i}{k-1}^{G-v}$.
\label{brule:2}
\end{brule}

Note that since Rule~R\ref{rrule:beta-2b} cannot be applied, the index $i$
exists.
By Lemma~\ref{lem:X} there is a minimum size $l$-path vertex cover $S$ such that
either $v \notin S$ or $w_i \notin S$.
Therefore Rule~B\ref{brule:2} is safe.

\begin{brule}
If $\beta(G,v) \geq 3$ branch on $\{v\}$ and on every set in $\Fv$.
\label{brule:3}
\end{brule}

We now describe an algorithm, called $\Falgname$, for constructing
a $v,k$-family for $G$.
We will show in Section~\ref{sec:no-intersecting} that if there are no two intersecting $v$-paths then $G$ has a simple structure and thus finding a
$v,k$-family for $G$ can be done in polynomial time.
Moreover, the $v,k$-family in this case consists of a single set.

Algorithm $\Falgname$ consists of the following reduction and branching rules.
\begin{frrule}
If $k < 0$, return $\emptyset$.
\end{frrule}
\begin{frrule}
If there are no $v$-paths in $G$, return $\{\emptyset\}$.
\end{frrule}
\begin{frrule}
If there are no intersecting $v$-paths,
compute a $v,k$-family for $G$ and return it.\label{frrule:compute}
\end{frrule}

\begin{fbrule}
Otherwise, let $P$ and $P'$ be intersecting $v$-paths.
Perform the following steps.
\vspace{-20pt}
\begin{algtab}
 $L \gets \emptyset$.\\
 \algforeach{$a \in (V(P)\cap V(P'))\setminus \{v\}$}
  \algforeach{$X \in \Falg{G-a}{k-1}$}
   Add $X\cup \{a\}$ to $L$.\\
  \algend
 \algend
 \algforeach{$a \in V(P) \setminus (V(P')\cup \{v\})$ and
			$a'\in V(P') \setminus (V(P)\cup \{v\})$ }
  \algforeach{$X \in \Falg{G-\{a,a'\}}{k-2}$}
   Add $X\cup \{a,a'\}$ to $L$.\\
  \algend
 \algend
 \algreturn $L$.\\
\end{algtab}
\end{fbrule}

\section{Analysis}
In this section we analyze the running time of algorithm $\lpvcalgname$.
The analysis is similar to the analysis of the algorithm of
Boral et al.~\cite{boral2016fast} (see also~\cite{Tsur_cluster}).

Let $A$ be some parameterize algorithm on graphs.
The run of algorithm $A$ on an input $(G,k)$ can be represented
by a \emph{recursion tree}, denoted $\tree{A}{G}{k}$, as follows.
The root $r$ of the tree corresponds to the call $A(G,k)$.
If the algorithm terminates in this call, the root $r$ is a leaf.
Otherwise, suppose that the algorithm is called recursively on the instances
$(G_1,k-a_1),\ldots,(G_t,k-a_t)$.
In this case, the root $r$ has $t$ children. The $i$-th child of $r$
is the root of the tree $\tree{A}{G_i}{k-a_i}$.
The edge between $r$ and its $i$-th child is labeled by $a_i$.
See Figure~\ref{fig:recursion-tree} for an example.

We define the \emph{weighted depth} of a node $x$ in $\tree{A}{G}{k}$ to be
the sum of the labels of the edges on the path from the root to $x$.
For an internal node $x$ in $\tree{A}{G}{k}$ define the
\emph{branching vector} of $x$ to be a vector containing
the labels of the edges between $x$ and its children.
The \emph{branching number} of a vector $(a_1,\ldots,a_t)$
(where $t \geq 2$) is the largest root of $P(x) = 1-\sum_{i=1}^t x^{-a_i}$.
The branching number of a node $x$ in $\tree{A}{G}{k}$ is the
branching number of the branching vector of $x$.
The running time of algorithm $A$ can be bounded by bounding the number of
leaves in $\tree{A}{G}{k}$.
The number of leaves in $\tree{A}{G}{k}$ is $O(c^k)$, where $c$ is the maximum
branching number of a node in the tree.

From the previous paragraph, we can bound the number of leaves in
$\tree{A}{G}{k}$ by giving an upper bound on the branching number of a node in
$\tree{A}{G}{k}$.
In some cases, there may be few nodes in $\tree{A}{G}{k}$ with large
branching numbers.
One can handle this case by modifying the tree $\tree{A}{G}{k}$ by contracting
some of its edges.
If $x$ is a node in $\tree{A}{G}{k}$ and $y$ is a child of $x$ which is an
internal node,
\emph{contracting} the edge $(x,y)$ means deleting the node $y$
and replacing every edge $(y,z)$ between $y$ and a child $z$ of $y$
with an edge $(x,z)$.
The label of $(x,z)$ is equal to the label of $(x,y)$ plus the label of $(y,z)$.
See Figure~\ref{fig:contract}.
Note that after edge contractions, the number of leaves in the contracted tree
is equal to the number of leaves in $\tree{A}{G}{k}$.
However, the maximum branching number of a node in the contracted tree may
be smaller than the maximum branching number of a node in $\tree{A}{G}{k}$.

For an integer $\alpha$, we define the \emph{top recursion tree} $\treetop{\alpha}{A}{G}{k}$ to be the tree
obtained by taking the subtree of $\tree{A}{G}{k}$ induced by the
nodes with weighted depth less than $\alpha$ and their children.
See Figure~\ref{fig:top} for an example.

\begin{figure}
\centering
\subfigure[$\tree{A}{G}{k}$\label{fig:recursion-tree}]{\includegraphics{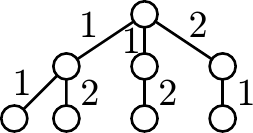}}
\quad
\subfigure[Contraction\label{fig:contract}]{\quad\includegraphics{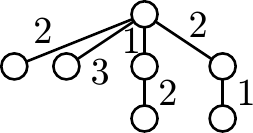}}
\quad
\subfigure[$\treetop{2}{A}{G}{k}$\label{fig:top}]{\includegraphics{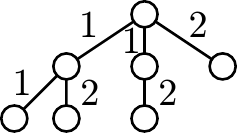}}
\caption{
Figure~(a) shows a recursion tree $\tree{A}{G}{k}$.
In this example, the call $A(G,k)$ makes three recursive calls:
$A(G_1,k-1)$, $A(G_2,k-1)$, and $A(G_3,k-2)$.
Figure~(b) shows the tree $\tree{A}{G}{k}$ after contracting the edge between
the root and its leftmost child.
The top recursion tree $\treetop{2}{A}{G}{k}$ is shown in
Figure~(c).}
\end{figure}

When we consider the recursion tree $\Ftree{G}{k}$ of algorithm $\Falgname$,
we assume that if in the recursive call on an instance $(G',k')$ the algorithm
applies Rule~FR\ref{frrule:compute} then the node $x$ in $\Ftree{G}{k}$ that
corresponds to this call has a single child $x'$ (which is a leaf) and the label
of the edge $(x,x')$ is the cardinality of the single set in the computed family.

To analyze the algorithm we define a tree $\combinedtree{G}{k}$ that
represents the recursive calls to both $\lpvcalgname$ and $\Falgname$.
Consider a node $x$ in $\lpvctree{G}{k}$, corresponding to a recursive call
$\lpvcalg{G'}{k'}$.
Suppose that in the recursive call $\lpvcalg{G'}{k'}$, the algorithm applies
Rule~B\ref{brule:3}.
Recall that in this case, the algorithm branches on $\{v\}$
and on every set in $\F{v}{k'}$. Denote $\F{v}{k'} = \{X_1,\ldots,X_t\}$.
In the tree $\lpvctree{G}{k}$, $x$ has $t+1$ children $y,x_1,\ldots,x_t$
(the child $y$ corresponds to the recursive call on
$(G'-v,k'-1)$, and a child $x_i$ corresponds to the recursive
call on $(G'-X_i,k'-|X_i|)$).
The label of the edge $(x,y)$ is $1$,
and the label of an edge $(x,x_i)$ is $|X_i|$.
The tree $\combinedtree{G}{k}$ also contains the nodes $x,y,x_1,\ldots,x_t$.
In $\combinedtree{G}{k}$, $x$ has two children $y$ and $x'$, where $x'$ is a new
node.
The labels of the edges $(x,y)$ and $(x,x')$ are $1$ and $0$,
respectively.
The node $x'$ is the root of a copy of the tree $\Ftree{G'}{k'}$
and the nodes $x_1,\ldots,x_t$ are the leaves of this tree.
Note that the label of an edge $(x,x_i)$ in the tree $\lpvctree{G}{k}$ is equal
to the sum of the labels of the edges on the path from $x$ to $x_i$ in
the tree $\combinedtree{G}{k}$.


Similarly, if in the recursive call $\lpvcalg{G'}{k'}$ that corresponds to $x$
the algorithm applies Rule~B\ref{brule:2}, then the algorithm branches on
every set in $\F{v}{k'}$ and on $\{v\} \cup S$ for every
$S \in \F{w_i}{k'-1}^{G-v}$.
Denote $\F{v}{k'} = \{X_1,\ldots,X_t\}$ and
$\F{w_i}{k'-1}^{G-v} = \{Y_1,\ldots,Y_s\}$.
In the tree $\lpvctree{G}{k}$, $x$ has $s+t$ children
$y_1,\ldots,y_s,x_1,\ldots,x_t$.
The label of an edge $(x,x_i)$ is $|X_i|$, and the label of an edge $(x,y_i)$
is $1+|Y_i|$.
In the tree $\combinedtree{G}{k}$, $x$ has two children $y'$ and $x'$.
The labels of the edges $(x,y')$ and $(x,x')$ are $1$ and $0$, respectively.
The node $x'$ is the root of a copy of the tree $\Ftree{G'}{k'}$, and
$x_1,\ldots,x_t$ are the leaves of this tree.
The node $y'$ is the root of a copy of the tree $\Ftree{G'-v}{k'-1}$,
and $y_1,\ldots,y_s$ are the leaves of this tree.

Finally,
suppose that in the recursive call $\lpvcalg{G'}{k'}$ that corresponds to $x$,
the algorithm applies Rule~B\ref{brule:1}.
In this case, the algorithm branches on every set in $\F{v}{k'}$.
In the tree $\lpvctree{G}{k}$, $x$ has $t = |\F{v}{k'}|$ children
$x_1,\ldots,x_t$.
In $\combinedtree{G}{k}$, the node $x$ is the root of a copy of the tree
$\Ftree{H'}{k'}$, and $x_1,\ldots,x_t$ are the leaves of this tree.

Our goal is to analyze the number of leaves in $\combinedtree{G}{k}$.
For this purpose, we perform edge contractions on $\combinedtree{G}{k}$ to
obtain a tree $\combinedtreeb{G}{k}$.
Consider a node $x$ in $\combinedtree{G}{k}$ that corresponds to
a recursive call $\lpvcalg{G'}{k'}$.
Suppose that in the recursive call $\lpvcalg{G'}{k'}$, the algorithm applies
Rule~B\ref{brule:3}.
Using the same notations as in the paragraphs above, we contract the following
edges:
\begin{enumerate}
\item
The edge $(x,x')$.
\item
The edges of the copy of $\Ftree{G'}{k'}$ whose endpoints correspond to
internal nodes in the tree $\Ftreetop{3}{G'}{k'}$.
In other words, the two endpoints of such edges have weighted depths less than
$3$ in the copy of $\Ftree{G'}{k'}$.
\end{enumerate}
If in the recursive call $\lpvcalg{G'}{k'}$ the algorithm applies
Rule~B\ref{brule:2}, we contract the following edges.
\begin{enumerate}
\item
The edges $(x,y')$ and $(x,x')$.

\item
The edges of the copy of $\Ftree{G'}{k'}$ whose endpoints correspond to
internal nodes in the tree $\Ftreetop{2}{G'}{k'}$.
\end{enumerate}
If in the recursive call $\lpvcalg{G'}{k'}$ the algorithm applies
Rule~B\ref{brule:1}, we do not contract edges.

The nodes in $\combinedtree{G}{k}$ and $\combinedtreeb{G}{k}$ that correspond
to nodes in $\lpvctree{G}{k}$ are called \emph{primary nodes},
and the remaining nodes are \emph{secondary nodes}.

If $x$ is a secondary node of $\combinedtree{G}{k}$ with more than one child,
the branching vector of $x$ is of the form $V_s = (1,\ldots,1,2,\ldots,2)$,
where the value 1 appears $s$ times for some $s \leq l-2$ and
the value 2 appears $(l-1-s)^2$ times.
The vector among $V_1,\ldots,V_{l-2}$ with largest branching number is $V_1$.
For $l=4$, for example, $V_1 = (1,2,2,2,2)$ and the branching number of $V_1$
is less than 2.562.
Therefore the branching number of all the secondary nodes in
$\combinedtreeb{G}{k}$ and all the primary nodes that correspond to recursive
calls in which Rule~B\ref{brule:1} is applied have branching numbers
less than 2.562.

We now bound the branching numbers of the remaining primary nodes of
$\combinedtreeb{G}{k}$.
Let $x$ be a primary node and suppose that in the corresponding recursive call
$\lpvcalg{G'}{k'}$ the algorithm applies Rule~B\ref{brule:3}.
Using the same notations as in the paragraphs above,
the branching vector of $x$ is $(1,c_1,\ldots,c_t)$,
where $(c_1,\ldots,c_t)$ are the weighted depths of the leaves of the top
recursion tree $\Ftreetop{3}{G'}{k'}$.
The top recursion tree that gives the worst branching number is the tree in
which every internal node has branching vector $V_1$.
See Figure~\ref{fig:B3-tree}.
For $l = 4$, the branching vector of $x$ in this case is
\[ (1, 3, 4, 4, 4, 4, 3, 3, 3, 3, 3, 4, 4, 4, 4, 3, 4, 4, 4, 4, 3, 4, 4, 4, 4, 3, 4, 4, 4, 4) \]
and the branching number is less than 2.897.

\begin{figure}
\centering
\includegraphics{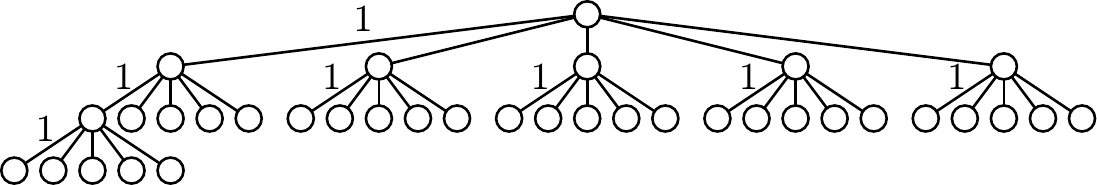}
\caption{The worst top recursion tree for $l = 4$.
The branching vector of every internal node is $(1,2,2,2,2)$.
Edges whose labels are not shown have label~2.\label{fig:B3-tree}}
\end{figure}

If in the recursive call $\lpvcalg{G'}{k'}$ the algorithm applies
Rule~B\ref{brule:2} then the branching vector of $x$ has the form
$(1+a,c_1,\ldots,c_t)$ for $a\geq 1$, or the form
$(2,\ldots,2,3,\ldots,3,c_1,\ldots,c_t)$, where
where $(c_1,\ldots,c_t)$ are the weighted depths of the leaves of the top
recursion tree $\Ftreetop{2}{G'}{k'}$,
and in the second form the value 2 appears $s$ times for some $s \leq l-2$ and
the value 3 appears $(l-1-s)^2$ times.
The worst case is when the branching vector is of the form
$(2,3,\ldots,3,c_1,\ldots,c_t)$, where the value 3 appears $(l-2)^2$ times.
Additionally, the top recursion tree that gives the worst branching number is
the tree in which every internal node has branching vector $V_1$.
For $l = 4$, the branching vector of $x$ in this case is
\[ (2, 3, 3, 3, 3, 2, 3, 3, 3, 3, 2, 2, 2, 2) \]
and the branching number is less than 2.952.

It follows that for $l = 4$, all nodes of $\combinedtreeb{G}{k}$ have
branching numbers less than 2.952.
Therefore, the time complexity of the algorithm for $l = 4$ is $O^*(2.952^k)$.
Similarly, for $l = 5,6,7$, the time complexities of the algorithm are
\timea, \timeb, and \timec, respectively.

\section{Graphs with no intersecting $v$-paths}
\label{sec:no-intersecting}

In this section we show how to compute a $v,k$-family in a graph $G$ that
does not contain two intersecting $v$-paths.
We will show that when $l \leq 7$ the graph $G$ has a simple structure,
and we will use this fact to obtain an algorithm for computing a $v,k$-family.

A $v$-path $P = x'_1,\ldots,x'_p,v,\allowbreak x_1,\ldots,x_{l-1-p}$
(where $p$ can be 0)
is called \emph{canonical} if $p \leq \lfloor (l-1)/2\rfloor$ and 
there is no path $P' = y'_1,\ldots,y'_q,v,y_1,\ldots,y_{l-1-q}$
such that $V(P') = V(P)$ and $q < p$.

\begin{lemma}\label{lem:same-vertex-set}
Let $G$ be a graph that does not contain two intersecting $v$-paths.
Then, all $v$-paths in $G$ have the same vertex set.
\end{lemma}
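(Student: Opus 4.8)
The plan is to argue by contradiction: suppose there exist two $v$-paths $P$ and $P'$ with $V(P) \neq V(P')$. Since $G$ has no two intersecting $v$-paths, these two paths can share no vertex other than $v$ itself; that is, $V(P) \cap V(P') = \{v\}$. The key idea is to build from $P$ and $P'$ a new $v$-path that \emph{does} intersect one of them, contradicting the hypothesis. To do this, I first split each path at $v$: write $P = x'_1,\ldots,x'_p,v,x_1,\ldots,x_{l-1-p}$ and $P' = y'_1,\ldots,y'_q,v,y_1,\ldots,y_{l-1-q}$. Each of $P$ and $P'$ thus contributes, on each side of $v$, a path emanating from $v$; among the four such ``half-paths'' (two from $P$, two from $P'$), at least one from $P$ has at least $\lceil (l-1)/2 \rceil$ vertices, and likewise at least one from $P'$ has at least $\lceil (l-1)/2 \rceil$ vertices.

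\textbf{Main step.} Take a longest half-path $Q$ of $P$ (with $a \geq \lceil (l-1)/2\rceil$ vertices, starting at a neighbor of $v$) and a longest half-path $Q'$ of $P'$ (with $b \geq \lceil (l-1)/2\rceil$ vertices). Since $V(P)\cap V(P') = \{v\}$, the vertices of $Q$ and $Q'$ are disjoint and neither contains $v$, so $Q', v, Q$ (in this order, with $Q'$ reversed so it ends at a neighbor of $v$) is a simple path through $v$ on $a+b+1 \geq 2\lceil(l-1)/2\rceil + 1 \geq l$ vertices. Truncating it to exactly $l$ vertices centered appropriately at $v$ gives a $v$-path $R$. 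By construction $R$ uses at least one vertex of $Q \subseteq V(P)\setminus\{v\}$ and at least one vertex of $Q' \subseteq V(P')\setminus\{v\}$ — in particular $R$ shares a non-$v$ vertex with $P$. If $V(R) \neq V(P)$ we are immediately done, since then $R$ and $P$ are intersecting $v$-paths. The only way to avoid this is $V(R) = V(P)$, but $R$ also contains a vertex of $P'$ not equal to $v$, which is impossible since $V(P)\cap V(P')=\{v\}$. Either way we reach a contradiction.

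\textbf{Anticipated obstacle.} The delicate point is the truncation: after forming the length-$(a+b+1)$ path through $v$, I must cut it down to exactly $l$ vertices while guaranteeing that the resulting $l$-path still retains $v$ and still straddles $v$ with at least one vertex originating from $Q$ and at least one from $Q'$. This requires a careful counting argument using $a,b \geq \lceil(l-1)/2\rceil$ — essentially, removing $a+b+1-l$ vertices from the ends cannot exhaust either side because each side already has at least $\lceil(l-1)/2\rceil \geq \lfloor l/2\rfloor$ vertices and we remove fewer than that from any single end. I would handle this by choosing to keep the $\lfloor (l-1)/2\rfloor$ vertices of $Q'$ nearest $v$ together with $v$ together with the $\lceil(l-1)/2\rceil$ vertices of $Q$ nearest $v$ (adjusting floors/ceilings so the total is $l$); both retained blocks are nonempty since $l \geq 2$. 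Once this truncation is set up cleanly, the contradiction is immediate, so I expect the rest of the argument to be routine.
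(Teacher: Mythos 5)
Your proposal is correct and follows essentially the same route as the paper: assume two $v$-paths with different vertex sets, note they meet only in $v$, concatenate the longer half of each at $v$ to get a path on at least $2\lceil(l-1)/2\rceil+1\geq l$ vertices, and extract an $l$-subpath through $v$ that intersects one of the originals. Your explicit care with the truncation (keeping a nonempty block on each side of $v$) is exactly the right detail — the paper achieves the same effect via its ``canonical path'' normalization and taking the first $l$ vertices of the concatenation — so nothing further is needed.
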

\begin{proof}
Suppose conversely that $P$ and $P'$ are canonical $v$-paths such that
$V(P) \neq V(P')$.
Since $P$ and $P'$ do not intersect, $V(P) \cap V(P') = \{v\}$.
Denote $P = x'_1,\ldots,x'_p,v,\allowbreak x_1,\ldots,x_{l-1-p}$ and
$P' = y'_1,\ldots,y'_q,v,y_1,\ldots,y_{l-1-q}$.
The path $P'' = x_{l-1-p},\ldots,x_1,v,y_1,\ldots,y_{l-1-q}$ is a path
with at least $l$ vertices.
Thus, the first $l$ vertices of $P''$ is a $v$-path that intersects $P$,
a contradiction.
\end{proof}

For the rest of this section, assume that $G$ is a graph that does not contain
two intersecting $v$-paths.
We also assume that $P = x'_1,\ldots,x'_p,v,x_1,\ldots,x_{l-1-p}$ is a
cannonical $v$-path in $G$.

\subsection{$l \leq 6$}
We now consider the case $l \leq 6$.
We will discuss the case $l = 7$ in the next subsection.

\begin{lemma}\label{lem:one-neighbor}
If $l \leq 6$, for every connected component $C$ of $G_v-V(P)$ there is exactly
one vertex in $V(P)$ that has neighbors in $C$.
\end{lemma}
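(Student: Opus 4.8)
The plan is to argue by contradiction: suppose some connected component $C$ of $G_v - V(P)$ has two distinct vertices $z_1, z_2 \in V(P)$, each with a neighbor in $C$. The goal is to build two intersecting $v$-paths (or one $v$-path intersecting $P$), contradicting the standing assumption on $G$. The key resource is that $C$ is connected, so there is a path $Q$ inside $C$ from a neighbor $u_1$ of $z_1$ to a neighbor $u_2$ of $z_2$; concatenating $z_1, Q, z_2$ gives a path through $C$ whose endpoints lie on $P$. By taking $Q$ to be a shortest such connecting path we may assume $Q$ is chordless and meets $V(P)$ only at its attachment points, which keeps the combinatorics under control.

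First I would set up notation using the canonical form $P = x'_1,\dots,x'_p,v,x_1,\dots,x_{l-1-p}$ with $p \le \lfloor (l-1)/2 \rfloor$, and split into cases according to where $z_1$ and $z_2$ sit on $P$ relative to $v$. The idea is that the detour through $C$ lets us ``reroute'' part of $P$: replace the segment of $P$ between $z_1$ and $z_2$ by the path $z_1, Q, z_2$, obtaining a new walk; since $|V(P)| = l$ and the detour adds at least one internal vertex of $C$, this walk has more than $l$ vertices, and a suitable length-$l$ subpath of it is a $v$-path that uses at least one vertex of $C$ and hence differs from $P$ — giving an intersecting pair by Lemma~\ref{lem:same-vertex-set}. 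The constraint $l \le 6$ enters because, with $p \le 2$, the segment of $P$ strictly between $v$ and the farther of $z_1,z_2$ is short enough that after rerouting we can still fit $v$ together with $l-1$ further vertices into the new walk on the side containing $v$; for larger $l$ one could be forced to drop $v$.

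The subcases to handle are: (i) $z_1, z_2$ both on the same side of $v$ on $P$; (ii) $z_1, z_2$ on opposite sides; and the degenerate possibility that one of them equals an endpoint $x'_1$ or $x_{l-1-p}$, where the rerouting still produces enough length by appending the connector beyond the endpoint. In each case I would exhibit an explicit $v$-path living in $\{v\} \cup V(P) \cup V(Q)$ that contains a vertex of $C$, then invoke Lemma~\ref{lem:same-vertex-set} to derive the contradiction. The main obstacle I anticipate is the bookkeeping in case (ii): after rerouting through $C$, one must verify that $v$ still lies on the chosen length-$l$ subpath and that the subpath genuinely touches $C$ — this is exactly where the bound $p \le \lfloor (l-1)/2\rfloor$ and $l \le 6$ must be used carefully, and it is worth checking the extreme instance $l = 6$, $p = 2$ by hand to be sure the path is long enough.
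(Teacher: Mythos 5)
Your overall strategy (contradiction, connect the two attachment vertices through $C$, produce a $v$-path intersecting $P$) matches the paper's, and your preliminary observations (shortest connector, the $x'_i$ side being degenerate) are on the right track. However, the central step of your sketch contains a false claim that hides the real difficulty. You assert that replacing the segment of $P$ between $z_1$ and $z_2$ by the detour $z_1,Q,z_2$ yields a walk with more than $l$ vertices ``since the detour adds at least one internal vertex of $C$.'' This is wrong whenever the detour is \emph{shorter} than the bypassed segment: if, say, $x_1$ and $x_4$ share a single common neighbor $y$ in $C$, the rerouted path $v,x_1,y,x_4,\ldots$ has \emph{fewer} vertices than $P$, and no length-$l$ subpath through $C$ exists along that walk. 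This short-detour case is not an edge case to be checked at the end; it is the main case, and your rerouting template simply does not produce a long enough path there.

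The paper handles it with a different construction that your sketch is missing: writing the attachment vertices as $x_i$ and $x_j$ with $i<j$ and the connector as $x_i,y_1,\ldots,y_s,x_j$, when $s \le j-i-2$ one takes the path $x'_1,\ldots,x'_p,v,x_1,\ldots,x_i,y_1,\ldots,y_s,x_j,x_{j-1},\ldots,x_{i+1}$ --- that is, after the detour one re-enters $P$ at $x_j$ and walks the skipped segment \emph{backwards} down to $x_{i+1}$, reusing those vertices to regain length. The count $p+1+j+s \ge p+i+5 \ge 6 \ge l$ is exactly where $l\le 6$ and $p\le 2$ enter, and it needs the side condition $i>0$ or $p>0$; the leftover case $i=p=0$ requires yet another construction (prepend $y_1$ before $v$ to get $y_1,v,x_1,\ldots,x_{l-2}$). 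Without the reversed-segment idea and these two residual cases, the proof does not go through, so as written the proposal has a genuine gap rather than just unfinished bookkeeping.
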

\begin{proof}
Denote $x_0 = v$.
Note that $l\geq 6$ implies that $p \leq 2$.
Therefore, a vertex $x'_i$ does not have neighbors in $V(G) \setminus V(P)$
(otherwise, if $x'_i$ is adjacent to $y\in V(G) \setminus V(P)$,
the path $y,x'_i,\ldots,x'_p,v,x_1,\ldots,x_{l-1-p}$ is a path with at least
$l$ vertices, and therefore there is a $v$-path that intersects $P$,
a contradiction).

Suppose conversely that there is a connected component $C$ of $G_v-V(P)$
such that there are at least two vertices in $V(P)$ that have neighbors in $C$.
By the above, these vertices are from $\{x_0,\ldots,x_{l-1-p}\}$.
Suppose that $x_i$ and $x_j$ have neighbors in $C$, where $i < j$.
Since $C$ is a connected component of $G_v-V(P)$,
there is a path $x_i,y_1,\ldots,y_s,x_j$ in $G$ such that
$y_1,\ldots,y_s \in C$.
If $s > j-i-2$ then the path
$x'_1,\ldots,x'_p,v,x_1,\ldots,x_i,y_1,\ldots,y_s,x_j,\ldots,x_{l-1-p}$
has $l-(j-i-1)+s \geq l$ vertices,
and therefore there is a $v$-path that intersects $P$, a contradiction.
Therefore, $s \leq j-i-2$.
If either $i > 0$ or $p > 0$, the path
$x'_1,\ldots,x'_p,v,x_1,\ldots,x_i,y_1,\ldots,y_s,x_j,\ldots,x_{i+1}$
has $p+1+j+s \geq p+1+(s+i+2)+s \geq p+i+5 \geq 6 \geq l$ vertices,
and therefore there is a $v$-path that intersects $P$, a contradiction.
The remaining case is when $i = 0$ and $p = 0$.
In this case, $y_1,v,x_1,\ldots,x_{l-2}$ is a $v$-path that intersects $P$,
a contradiction.
\end{proof}


\begin{lemma}\label{lem:no-Pl}
If $l \leq 6$ and $C$ is a connected component of $G_v-V(P)$ that contains
an $l$-path then the unique vertex in $V(P)$ that has neighbors in $C$
is $x_{l-1-p}$.
\end{lemma}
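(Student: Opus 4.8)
The plan is to argue by contradiction: suppose $C$ is a connected component of $G_v - V(P)$ that contains an $l$-path, and, using Lemma~\ref{lem:one-neighbor}, let $x_i$ (with $0 \le i \le l-1-p$, writing $x_0 = v$) be the unique vertex of $V(P)$ having neighbors in $C$. Assume $i \ne l-1-p$; I want to build a $v$-path intersecting $P$, contradicting the hypothesis on $G$. Since $C$ contains an $l$-path, it certainly contains a path on many vertices; combined with an edge from $x_i$ into $C$, I get a long path hanging off $x_i$.

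First I would make precise the ``long path hanging off $x_i$'' claim. Take an $l$-path $Q$ inside $C$, let $z$ be a vertex of $V(P)$ adjacent to some vertex of $C$ (so $z = x_i$), and take a shortest path in $G[C \cup \{x_i\}]$ from $x_i$ to $Q$; splicing it onto the longer half of $Q$ (as in the proof of Lemma~\ref{lem:C0}) yields a path $R = x_i, u_1, \ldots, u_m$ with all $u_j \in C$ and $m \ge \lceil l/2 \rceil \ge 3$ (for $l \le 6$). Now I would glue $R$ onto the part of $P$ on the ``short side'' of $x_i$: concretely the path $x'_1,\ldots,x'_p, v, x_1,\ldots,x_i, u_1,\ldots,u_m$ has $p + 1 + i + m$ vertices. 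If this number is at least $l$, its first $l$ vertices form a $v$-path whose vertex set differs from $V(P)$ (it uses $u_1 \in C$, which is not in $V(P)$) and meets $V(P)$ in at least $v$ plus, e.g., $x_1$ when $i \ge 1$ — more care is needed when $i = 0$, handled below — so $P$ and this path intersect, a contradiction.

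The bookkeeping splits into cases. When $i \ge 1$, the glued path has $p + 1 + i + m \ge 0 + 1 + 1 + 3 = 5$; for $l \le 5$ this already reaches $l$, and for $l = 6$ I would instead observe $m \ge \lceil 6/2\rceil = 3$ and $i \ge 1$, so $p+1+i+m \ge 1 + 1 + 1 + 3 = 6$ only if $p \ge 1$ — so for $l = 6$, $p = 0$, $i = 1$ I must squeeze out one more vertex, using that $Q$ has $l = 6$ vertices so the shortest-path-plus-half-of-$Q$ construction actually gives $m \ge 6 - (\text{stuff})$; a cleaner route is: since $C$ contains a $6$-path and $x_i$ has a neighbor in $C$, the component $C \cup \{x_i\}$ contains a path on $\ge 7$ vertices starting at $x_i$ (append $x_i$ to one end of the $6$-path, routing through $C$), hence a path $x_i, u_1, \ldots, u_6$ with $u_j \in C$; then $v, x_1, u_1, \ldots, u_4$ (when $i = 1$) is a $6$-path containing $v$ and intersecting $P$ at $x_1$. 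When $i = 0$ (so $x_i = v$), the path $u_m, \ldots, u_1, v, x_1, \ldots, x_{l-1-p}$ has $m + 1 + (l-1-p) \ge l+1$ vertices since $m \ge \lceil l/2 \rceil \ge 1$ and it is a path (the $u_j$ lie in $C \ne$ component of the $x_j$'s — actually they may coincide, but $u_1 \notin V(P)$ regardless), so its first $l$ vertices form a $v$-path using $u_1 \notin V(P)$ and containing all of $x_1, \ldots$, hence intersecting $P$; here I should double-check that $u_1 \ne x_j$ for all $j$, which holds because $u_1 \in C$ and $C \cap V(P) = \emptyset$.

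The main obstacle I anticipate is the edge case $l = 6$, $p = 0$, $i = 1$ (and symmetrically small-$i$ situations), where the naive vertex count is exactly one short of $l$; resolving it requires extracting a genuinely long path inside $C \cup \{x_i\}$ rather than just using the ``half of $Q$'' bound, i.e.\ using the full length of the $l$-path in $C$ together with the fact that any vertex of $C$ adjacent to $x_i$ can be reached and the path in $C$ can be traversed from that reached vertex. Once that sharper ``$C \cup \{x_i\}$ contains a path on $\ge l+1$ vertices starting at $x_i$'' statement is in hand, all the cases close uniformly, and the constraint $p \le \lfloor (l-1)/2 \rfloor \le 2$ (from $l \le 6$), already recorded in the proof of Lemma~\ref{lem:one-neighbor}, keeps the arithmetic bounded.
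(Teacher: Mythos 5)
Your overall strategy is the same as the paper's: take a shortest path from the attaching vertex $x_i$ to the $l$-path $Q$ in $C$, splice it with the longer half of $Q$ to get a long path hanging off $x_i$, glue that onto the initial segment of $P$, and count vertices. But there is a concrete error in how you close the hardest case. Your bound $m \geq \lceil l/2\rceil$ undercounts what the construction gives: if the shortest path from $x_i$ meets $Q=y_1,\ldots,y_l$ at $y_q$ with $q \leq \lceil l/2\rceil$, the spliced path $x_i,z_1,\ldots,z_s,y_q,\ldots,y_l$ has $m = s + (l-q+1) \geq \lfloor l/2\rfloor + 1$ vertices after $x_i$, which is $4$ (not $3$) for $l=6$. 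With that correct count, $p+1+i+m \geq 1+1+4 = 6 = l$ already for $i\geq 1$, $p=0$, and your ``edge case'' $l=6,\ p=0,\ i=1$ disappears; this is exactly the count the paper uses ($2+\lceil (l+1)/2\rceil \geq l$ for $l \leq 6$). The patch you propose for that phantom edge case is, however, genuinely false: it is not true that $C\cup\{x_i\}$ must contain a path on $l+1$ vertices starting at $x_i$. For instance, $C$ could be exactly a $6$-path $y_1,\ldots,y_6$ with $x_i$ adjacent only to $y_3$; then the longest path starting at $x_i$ has $5$ vertices. Since your concluding paragraph presents this ``sharper statement'' as the key that makes all cases close uniformly, the proof as written rests on a wrong claim, even though the correct arithmetic on $m$ would have made it unnecessary.

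Two smaller points. First, $\lceil l/2\rceil \geq 3$ fails for $l \leq 4$, so that inequality should be stated with the corrected bound $\lfloor l/2\rfloor+1$. Second, in the case $i=0$ you take the ``first $l$ vertices'' of $u_m,\ldots,u_1,v,x_1,\ldots,x_{l-1-p}$; if $m \geq l-1$ this window need not contain $v$ (or contains $v$ but no $x_j$, in which case the resulting $v$-path does not \emph{intersect} $P$ under the paper's definition, which requires a common vertex other than $v$). You should instead take the window $u_1,v,x_1,\ldots,x_{l-2}$ (for $p=0$), which contains $v$, a vertex of $C$, and $x_1$, and hence intersects $P$ as required.
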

\begin{proof}
Let $x$ be the unique vertex in $V(P)$ that has neighbors in $C$.
Let $P' = y_1,\ldots,y_l$ be an $l$-path in $C$.
Let $y_q$ be the vertex in $V(P')$ that is closest to $x$, and suppose
without loss of generality that $q \leq \lceil l/2 \rceil$.
Let $x,z_1,\ldots,z_s,y_q$ be the shortest path from $x$ to $y_q$
($s$ can be 0).

We now show that if $i < l-1-p$ then $x \neq x_i$.
Suppose conversely that $x = x_i$.
The path $v,x_1,\ldots,x_i,z_1,\ldots,z_s,y_q,\ldots,y_l$
has at least $2+\lceil (l+1)/2\rceil \geq l$ vertices (since $l\leq 6$),
and therefore there is a $v$-path that intersects $P$, a contradiction.
Therefore, $x \neq x_i$ for every $i < l-1-p$.
Similarly, $x \notin \{x'_1,\ldots,x'_p,v\}$.
\end{proof}

For a $v$-hitting set $X$, let $\alpha(x)$ be the maximum integer $q$ such that
there is a $q$-path in $G-X$ whose first vertex is $x_{l-1-p}$ and its second
vertex is $x_{l-2-p}$.
If $x_{l-1-p} \in X$ then define $\alpha(X) = 0$ and if
$x_{l-1-p} \notin X$ and $x_{l-2-p} \in X$ then define $\alpha(X) = 1$.
A minimum size $v$-hitting set $X$ is called \emph{optimal} if
$\alpha(X) \leq \alpha(Y)$ for every minimum size $v$-hitting set $Y$.

\begin{lemma}\label{lem:optimal}
If $l \leq 6$, $X$ is an optimal $v$-hitting set, and $|X|\leq k$
then $\mathcal{F} = \{X\}$ is a $v,k$-family.
\end{lemma}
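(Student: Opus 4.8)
The plan is to show that $\mathcal{F}=\{X\}$ satisfies the two conditions in the definition of a $v,k$-family. Condition (1) is immediate from the hypothesis $|X|\le k$. For condition (2), I need to produce, from an arbitrary minimum size $l$-path vertex cover, an $l$-path vertex cover $S$ of size at most $k$ with $v\in S$ or $X\subseteq S$. By Lemma~\ref{lem:X}, there is a minimum size $l$-path vertex cover $S$ with $v\notin S$ or $|X\setminus S|\ge 2$; in the first sub-case there is nothing more to do unless I also want $X\subseteq S$, so the real work is the second sub-case: $v\in S$ and at least two vertices of $X$ are missing from $S$.

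First I would recall the structural picture established in this subsection. Since $G$ has no two intersecting $v$-paths, Lemma~\ref{lem:same-vertex-set} says all $v$-paths share the vertex set $V(P)$, where $P=x'_1,\dots,x'_p,v,x_1,\dots,x_{l-1-p}$ is canonical. By Lemma~\ref{lem:one-neighbor} each connected component of $G_v-V(P)$ attaches to exactly one vertex of $V(P)$, and by Lemma~\ref{lem:no-Pl} any such component containing an $l$-path attaches only at the endpoint $x_{l-1-p}$. The key consequence I want to extract is that every $l$-path in $G_v$ — and hence every $l$-path we must hit — either uses $x_{l-1-p}$ together with its predecessor $x_{l-2-p}$ on that path, or is entirely "on the $V(P)$ side" and is therefore killed by deleting an appropriate vertex of $V(P)\setminus\{v\}$. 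This is exactly why the invariant $\alpha(X)$ is the right measure: a set $Y\subseteq C_v\setminus\{v\}$ is a $v$-hitting set iff it hits all $v$-paths, and the only $l$-paths that can escape a set hitting $V(P)\setminus\{v\}$ are those that run out through $x_{l-1-p},x_{l-2-p}$ into a component — these exist iff $\alpha(Y)\ge l$.

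Now given the bad sub-case ($v\in S$, $|X\setminus S|\ge 2$), I would set $S'=(S\setminus\{v\})\cup X$. As in the proof of Lemma~\ref{lem:X}, $S'$ is again an $l$-path vertex cover: any $l$-path in $G-S'$ would have to be a $v$-path (since $G-S$ has none and $X$ hits every $v$-path). Moreover $|S'|\le |S|-1+|X\setminus S| $; if $|X\setminus S|\ge 2$ this only gives $|S'|\le |S|+|X\setminus S|-1$, which need not be $\le k$. So a crude swap is not enough, and this is the main obstacle: I must not put all of $X$ back, but rather only enough of it, using optimality of $X$ to control the size. The fix is to argue that because $X$ is a \emph{minimum size} $v$-hitting set and $S\setminus\{v\}$ already hits every $v$-path except possibly those escaping through $x_{l-1-p},x_{l-2-p}$, one can choose which vertices to add so that the total stays within $|S|\le k$; concretely, if $S$ already contains $x_{l-1-p}$ or $x_{l-2-p}$ we can afford to add the at most one remaining vertex of $X$, and if it contains neither, then $\alpha$ considerations combined with optimality of $X$ let us replace a vertex $S$ spends "deep in a component past $x_{l-1-p}$" by a vertex of $X$ without increasing the size — here is where optimality (minimality of $\alpha(X)$ among minimum hitting sets) is used, to guarantee $x_{l-1-p}$ or $x_{l-2-p}$ lies in $X$ so that adding $X$ also shortens the dangerous path.

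Concretely the steps are: (i) dispose of condition (1) and the easy sub-case of Lemma~\ref{lem:X}; (ii) in the remaining sub-case, use Lemmas~\ref{lem:same-vertex-set}–\ref{lem:no-Pl} to classify the $l$-paths of $G_v$ into "$V(P)$-internal" ones and "$x_{l-1-p}$-component" ones; (iii) show that $S$ (minus $v$) hits all $V(P)$-internal paths, so any $l$-path in $G-(S\setminus\{v\})$ goes through $x_{l-1-p}$ and $x_{l-2-p}$; (iv) conclude that $X$ must contain one of $x_{l-1-p},x_{l-2-p}$ — this is the point where I expect to invoke that $X$ is a \emph{minimum size} $v$-hitting set (if not, $X$ would have a redundant vertex); (v) build $S'$ by adding to $S\setminus\{v\}$ only a size-$\le 1$ subset of $X$ that re-hits the escaping paths, possibly removing one now-redundant "deep" vertex of $S$, so that $|S'|\le|S|\le k$ and $v\in S'$ (or, if we choose to add all of $X$, $X\subseteq S'$). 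I expect step (iv)–(v) — converting "$|X\setminus S|\ge 2$" plus optimality into a genuine size-preserving replacement — to be the crux; everything else is bookkeeping on the path structure already set up.
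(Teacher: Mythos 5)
Your case analysis is inverted, and as a result the proposal never addresses the case that actually requires work. The definition of a $v,k$-family asks you to exhibit a cover $S$ of size at most $k$ with $v\in S$ \emph{or} $X\subseteq S$. So the trivial sub-case is $v\in S$ (the first disjunct already holds, regardless of how many vertices of $X$ are missing from $S$), and the sub-case needing an argument is $v\notin S$, where you must convert $S$ into a cover of size at most $k$ that \emph{contains} $X$. You declare the sub-case $v\notin S$ to be "nothing more to do" and then spend the entire proof on the sub-case $v\in S$, $|X\setminus S|\ge 2$, which needs no proof at all. Part of the confusion is that you reach for Lemma~\ref{lem:X}, whose disjunction ($v\notin S$ or $|X\setminus S|\ge 2$) points in the opposite direction from the family condition; that lemma is used in the paper to justify Rule~B2, not Lemma~\ref{lem:optimal}, and it does not help here.

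The paper's actual argument, which is absent from your proposal, runs as follows: assume $v\notin S$, let $S_0$ be the vertices of $S$ lying on $v$-paths (by Lemma~\ref{lem:same-vertex-set} these all lie in $V(P)$), note $S_0$ is a $v$-hitting set so $|S_0|\ge|X|$ by minimality of $X$, and replace $S_0$ by $X$. The only $l$-paths that could survive this swap are those leaving $V(P)$ through $x_{l-1-p}$ into a component of $G_v-V(P)$ (Lemma~\ref{lem:no-Pl}); the optimality of $X$ --- minimality of $\alpha(X)$ --- guarantees the swap does not lengthen the escaping tail when $|S_0|=|X|$, and when $|S_0|>|X|$ one can afford to add $x_{l-1-p}$ outright. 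Your steps (iii)--(v) gesture at some of these ingredients (the $\alpha$ invariant, the role of $x_{l-1-p}$), but they are deployed toward the wrong goal (removing $v$ from $S$ rather than inserting $X$ into an $S$ avoiding $v$), and your claim in step (iv) that $X$ must contain one of $x_{l-1-p},x_{l-2-p}$ is neither needed nor true in general. As written, the proposal does not prove the lemma.
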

\begin{proof}
Suppose that $S$ is an $l$-path vertex cover of $G$ of size at most $k$.
If $v\in S$ we are done. Suppose that $v \notin S$.
Let $S_0$ be the set of all vertices in $S$ that are contained in $v$-paths
of $G$.
Since $X$ is a minimum size $v$-hitting set, $|S_0| \geq |X|$.
By Lemma~\ref{lem:no-Pl} and the optimality of $X$, if $|S_0| = |X|$ then
$(S\setminus S_0) \cup X$ is an $l$-path vertex cover of $G$ of size at most
$k$.
Otherwise we have $|S_0| > |X|$ and therefore
$(S\setminus S_0) \cup X \cup \{x_{l-1-p}\}$ is an $l$-path vertex cover of $G$
of size at most $k$.
\end{proof}

\begin{lemma}\label{lem:v-hitting-set}
If $l \leq 6$, there is an optimal $v$-hitting set $X$ such that
$X \subseteq V(P)$.
\end{lemma}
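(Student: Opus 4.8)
The plan is to start from an arbitrary optimal $v$-hitting set $X_0$ and transform it, vertex by vertex, into a $v$-hitting set contained in $V(P)$ without increasing its size and without increasing the value of $\alpha$. The main structural tools are Lemma~\ref{lem:one-neighbor}, which tells us that each connected component $C$ of $G_v - V(P)$ attaches to $V(P)$ through a single vertex, and Lemma~\ref{lem:no-Pl}, which pins that vertex down to $x_{l-1-p}$ whenever $C$ itself contains an $l$-path. First I would observe that any $v$-path that uses a vertex outside $V(P)$ must, by Lemma~\ref{lem:same-vertex-set}, have the same vertex set as $P$ — wait, that is false since paths with vertices outside $V(P)$ are not $v$-paths by that lemma; rather, every $v$-path has vertex set $V(P)$, so a $v$-hitting set need only hit the single $v$-path $P$. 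Thus being a $v$-hitting set is equivalent to $X \cap (V(P)\setminus\{v\}) \neq \emptyset$. This already shows $v$-hitting sets can be taken inside $V(P)$; the content of the lemma is the word \emph{optimal}, i.e. that we can simultaneously minimize $\alpha$.

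So the real task is: among minimum-size $v$-hitting sets (which, by the above, are exactly the singletons $\{x_i\}$ or $\{x'_j\}$ together with possibly non-minimum larger sets — but minimum size here is $1$ unless $V(P)\setminus\{v\}=\emptyset$), find one inside $V(P)$ achieving the minimum $\alpha$. I would argue as follows. Let $X_0$ be any optimal $v$-hitting set. If $X_0 \subseteq V(P)$ we are done, so suppose some $u \in X_0 \setminus V(P)$. Then $u$ lies in a connected component $C$ of $G_v - V(P)$, and by Lemma~\ref{lem:one-neighbor} a unique vertex $x^\ast \in V(P)$ has neighbors in $C$. I would replace $u$ by $x^\ast$: set $X_1 = (X_0 \setminus \{u\}) \cup \{x^\ast\}$. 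This does not increase the size. It remains a $v$-hitting set since $X_0$ already was and being one only depends on hitting $P$ — but one must check $X_1$ is still contained in $C_v\setminus\{v\}$, which holds since $x^\ast \in V(P)\setminus\{v\} \subseteq C_v\setminus\{v\}$ (note $x^\ast\neq v$: if the attaching vertex were $v$ itself... here $v\notin$ the set of $x_i$'s in the indexing, so $x^\ast$ is a genuine path vertex). The key point is that this replacement does not increase $\alpha$: $\alpha(X)$ is governed by the longest path in $G-X$ starting $x_{l-1-p}, x_{l-2-p},\dots$, and deleting $x^\ast$ (which, being on $V(P)$, is at least as "central" an obstruction as $u$) can only shorten or preserve such paths — more precisely, any path in $G - X_1$ is also a path in $G - X_0$ once we note the only vertex removed from $X_0$ is $u\in C$ and the only vertex added is $x^\ast$, and by Lemma~\ref{lem:no-Pl} if $C$ contained an $l$-path then $x^\ast = x_{l-1-p}$, so adding $x^\ast$ to the hitting set and removing $u$ cannot create a longer $x_{l-1-p}$-initiated path. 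Iterating over all vertices of $X_0$ outside $V(P)$ yields the desired $X\subseteq V(P)$ with $\alpha(X)\le\alpha(X_0)$, hence $X$ is optimal.

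The step I expect to be the main obstacle is the monotonicity claim $\alpha(X_1)\le\alpha(X_0)$. The subtlety is that replacing $u$ by $x^\ast$ \emph{removes} $u$ from the hitting set, which could in principle lengthen some path from $x_{l-1-p}$ through the component $C$ — so I must argue carefully that $x^\ast$ being added compensates: any path from $x_{l-1-p}$ that would newly use $u$ must enter $C$, hence pass through $x^\ast$ (by Lemma~\ref{lem:one-neighbor}, $C$'s only gateway to $V(P)$), and $x^\ast$ is now deleted. If $x^\ast = x_{l-1-p}$ itself (the case of Lemma~\ref{lem:no-Pl}), then $\alpha(X_1)=0$ and there is nothing to prove; if $x^\ast = x_{l-2-p}$, then $\alpha(X_1)\le 1$; and if $x^\ast \in \{x'_1,\dots,x'_p,v,x_1,\dots,x_{l-3-p}\}$, then $C$ contains no $l$-path by Lemma~\ref{lem:no-Pl}, and a path from $x_{l-1-p}$ starting with edge to $x_{l-2-p}$ that reaches $C$ must traverse $x^\ast$, which is deleted in $X_1$, so again no lengthening occurs; moreover in the remaining subcase where the longest path does not touch $C$ at all, it survives unchanged whether we delete $u$ or $x^\ast$. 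Handling this small case analysis cleanly is the crux, but each case is short given the two attachment lemmas.
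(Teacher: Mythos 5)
Your proposal is correct and follows essentially the same exchange argument as the paper's proof: replace each vertex of $X$ outside $V(P)$ by the unique attachment vertex of its component (Lemma~\ref{lem:one-neighbor}), with Lemma~\ref{lem:no-Pl} guaranteeing that $\alpha$ does not increase; your case analysis merely fills in the monotonicity claim $\alpha(X')\leq\alpha(X)$ that the paper asserts without elaboration. Your opening observation is also apt: since Lemma~\ref{lem:same-vertex-set} forces every $v$-path to have vertex set $V(P)$, every minimum-size (hence every optimal) $v$-hitting set is already a singleton contained in $V(P)$, so the statement is essentially immediate and both arguments are doing more work than strictly necessary.
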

\begin{proof}
Let $X$ be an optimal $v$-hitting set $X$.
If there is a vertex $x \in X \setminus V(P)$, let $C$ be the connected
component of $G_v-V(P)$ containing $x$ (note that $x\in C_v$ due to the
minimality of $X$).
Let $y$ be the unique vertex in $V(P)$ that has neighbors in $C$.
The set $X' = (X \setminus \{x\}) \cup \{y\}$ is also a $v$-hitting set.
Additionally, $|X'| = |X|$ and $\alpha(X') \leq \alpha(X)$.
By repeating this process we obtain an optimal $v$-hitting set that is
contained in $V(P)$.
\end{proof}

We now describe how to construct a $v,k$-family of a graph $G$ that does not
contain intersecting $v$-sets (assuming that $l\leq 6$).
By enumerating all subsets of $V(P)$,
we can find an optimal $v$-hitting set $X \subseteq V(P)$.
By Lemma~\ref{lem:optimal}, the family $\mathcal{F} = \{ X \}$
is a $v,k$-family of $G$.

\subsection{$l = 7$}

We now describe the differences between the case $l = 7$ and the case
$l \leq 6$.

First, in contrast to Lemma~\ref{lem:no-Pl}, it is possible that there is
connected component $C$ of $G-V(P)$ such that $C$ contains an $l$-path
and there is a vertex $x \in V(P)\setminus \{x_{l-1-p}\}$ that is
adjacent to a vertex in $C$.
It can be shown that in this case, $x = x_1$.
Additionally, $x_1$ is adjacent to a single vertex $y$ in $C$, and every
$l$-path in $C$ must contain $y$.
It follows that if $S$ is an $l$-path vertex cover of $G$ and $y \notin S$ then
$(S\setminus C)\cup \{y\}$ is also an $l$-path vertex cover of $G$.

In contrast to Lemma~\ref{lem:one-neighbor},
it is possible that there is a connected component $C$ of $G_v-V(P)$ such that
there are two vertices in $V(P)$ that have neighbors in $C$.
It is easy to show that in this case the vertices in $V(P)$ that have neighbors
in $C$ are $x_1$ and $x_4$.
Additionally, $|C| = 1$ (otherwise there is a $v$-path $v,x_1,\ldots,x_4,y,y'$,
where $y,y' \in C$, and this $v$-path intersects $P$, a contradiction).
We call such a connected component an \emph{$x_1/x_4$-component}.
Note that there can be several $x_1/x_4$-components.

We now claim that Lemma~\ref{lem:v-hitting-set} is also true when $l = 7$.
To show that we need to consider the case when there are $x_1/x_4$-components.
Suppose that $C = \{y\}$ is an $x_1/x_4$-component.
We claim that every 7-path that contains $y$ also contains $x_4$.
Suppose conversely that $P'$ is a 7-path that contains $y$ and does not contain
$x_4$.
Then $P'$ must be of the form $y,x_1,\ldots,x_i,y_1,\ldots,y_{6-i}$ where
$i \leq 3$ and $y_1,\ldots,y_{6-i}$ are vertices in some connected component of
$G_v-V(P)$. The path $v,x_1,\ldots,x_i,y_1,\ldots,y_{6-i}$ is a $v$-path that
intersects $v$, a contradiction.
Therefore, the claim is true.
From the claim we have that if $X$ is an optimal $v$-hitting set then
$X' = (S \setminus \{y\}) \cup \{x_4\}$ is also an optimal $v$-hitting set.
Therefore, Lemma~\ref{lem:v-hitting-set} is also true when $l = 7$.

From the above discussion, we can construct a $v,k$-family of $G$ using the same
algorithm described above for the case $l\leq 6$.

\section{Concluding remarks}
We have shown algorithms for $l$-path vertex cover for $l = 5,6,7$ that are
faster than previous algorithms.
It may be possible to use our approach for $l = 8$ or other small values of $l$
by giving a more involved case analysis of graphs that do not have intersecting
$v$-paths.
An interesting open question is whether $l$-path vertex cover can be solved
in $O^*((l-1-\epsilon)^k)$ time for every value of $l$.

\bibliographystyle{plain}
\bibliography{lpvc,parameterized,dekel}

\end{document}